\def\BibTeX{{\rm B\kern-.05em{\sc i\kern-.025em b}\kern-.08em
    T\kern-.1667em\lower.7ex\hbox{E}\kern-.125emX}}
\newcommand{\email}[1]{\\ \small{\url{#1}} \\}
\newcommand{\institution}[1]{\\ \parbox{3.0in}{\small{#1}}}
\newcommand{\keywords}[1]{\small\textbf{Keywords: }#1}
\newcommand{\AMSsubj}[1]{\noindent\textbf{AMS subject classifications:
}#1}
\newcommand\whenaccepted{Submitted: ???;
                         Revised: ????;
                         Accepted: ????;
                         Posted: ????}
\def\qed{\unskip\kern10pt{\unitlength1pt\linethickness{.4pt}\framebox(6,6){}}}
\newtheorem{definition}{Definition}
\newtheorem{theorem}{Theorem}
\newtheorem{lemma}[theorem]{Lemma}
\newtheorem{remark}{Remark}
\newenvironment{glists}[4]{
\begin{list}{}{
\setlength{\labelwidth}{#2}
\setlength{\labelsep}{#3}
\setlength{\leftmargin}{#1}
\addtolength{\leftmargin}{\labelwidth}
\addtolength{\leftmargin}{\labelsep}
\setlength{\parsep}{#4}
\setlength{\topsep}{\parsep}
\setlength{\itemsep}{\parsep}
\setlength{\listparindent}{0in}
}
}{
\end{list}
}
\newcommand{\iteml}[1]{\item[#1 \hfill]}
\title{The Secant-Newton Map is Optimal \\Among Contracting $n^{th}$ Degree Maps \\for $n^{th}$ Root Computation
\footnote{\whenaccepted}
}
\author{Kayla Bishop 
\institution{Department of Mathematics, North Carolina State University,
Box 8205, Raleigh NC 27695, USA}
\email{kbishop2@ncsu.edu}
\and Hoon Hong
\thanks{This research was partly supported by US National Science Foundation Grant 1319632.}
\institution{Department of Mathematics, North Carolina State University,
Box 8205, Raleigh NC 27695, USA}
\email{hong@ncsu.edu}
}
\date{}
\begin{document}

\maketitle

\begin{abstract}
Consider the problem: given a real number $x$ and an error bound $\varepsilon
$, find an interval such that it contains $\sqrt[n]{x}$ and its width is less
than~$\varepsilon$. One way to solve the problem is to start with an initial
interval and to repeatedly update it by applying an interval refinement map on
it until it becomes narrow enough. In this paper, we prove that the well known
Secant-Newton map is  \emph{optimal} among a certain family of natural generalizations.
\end{abstract}

\noindent\keywords $n^{th}$ root, interval mathematics, secant, Newton,
contracting

\noindent \AMSsubj 65G20, 65G30

\section{Introduction}
Computing the $n^{th}$ root of a given real
number is a fundamental operation.\footnote{
This paper is a sequel to \cite{erascu_square_2013} where square-root ($n=2$) was considered. In this paper, we generalize the result to the $n^{th}$ root.  For the readers who have not yet read \cite{erascu_square_2013}, we will make this paper self-contained. 
For the readers who have read \cite{erascu_square_2013}, we will do our best to use the same/similar notations and structuring/styles so that the readers can easily identify the similarities and the differences between the two papers.
} 
Naturally, various numerical methods have
been developed
\cite{wensley_class_1959,%
moore_interval_1966,%
moore_introduction_2009,%
hart_computer_1968,%
cody_software_1980,%
alefeld_introduction_1983,%
Revol_interval_2003,%
morrison_method_1956,%
meggitt_pseudo_1962,%
beebe_accurate_1991,%
fowler_square_1998,%
bruce_cube_1980,%
dubeau_nth_newton_2009,%
hernandez_nth_alg_2004,%
laufer_iteration_1963}.
In this paper, we consider an interval version of the
problem~\cite{moore_interval_1966,alefeld_introduction_1983,moore_introduction_2009}%
: given a real number~$x$ and an error bound $\varepsilon$, find an
interval
such that it contains $\sqrt[n]{x}$ and its width is less
than~$\varepsilon$.
One way to solve the problem starts with an initial
interval and repeatedly updates it by applying a \emph{refinement} map on it until it becomes narrow enough (see below).
\smallskip
\begin{glists}{2.4em}{2em}{0em}{0.1em}
\iteml{\sf in:}   $x > 0,\;\;\varepsilon > 0$
\iteml{\sf out:}  $I$, interval such that $\sqrt[n]{x} \in I$ and {\sf
width}$(I) \leq \varepsilon$
\end{glists}
\smallskip
\begin{glists}{2em}{0em}{0em}{0.1em}
\iteml{} $I \leftarrow [\min(1,x),\max(1,x)]$
\iteml{} \sf{while width}$(I) > \varepsilon$
\iteml{} $\;\;\;\;\;\;\;I \leftarrow R(I,x)$
\iteml{} \sf{return} $I$
\end{glists}
\smallskip
\noindent A well known refinement map $R^{*},$ tailored for $n^{th}$ root computation, is
obtained by combining the secant map and the Newton map where the
secant/Newton map is used for determining the lower/upper bound of the
refined
interval, that is,
\[
R^{*}:\;\;[L,U],x\;\;\mapsto\;\;\left[  L+\frac{x-L^{n}}{L^{n-1}+L^{n-2}U+...+LU^{n-2}+U^{n-1}},U+\frac{x-U^{n}}{nU^{n-1}}\right]
\]
which can be easily derived from Figure~\ref{SN}.
\begin{figure}[th]
\center  \includegraphics[width=0.5\textwidth]{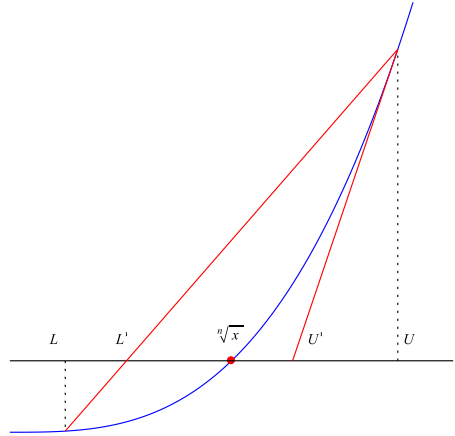} \label{SN}%
\caption{Derivation of Secant-Newton map}%
\end{figure}

A question naturally arises. \emph{Is there any refinement map which is
better
than Secant-Newton?} In order to answer the question rigorously, one
first
needs to fix a search space, that is, a family of maps in which we
search
for
a better map. In this paper, we will consider the family of all the
\textquotedblleft natural generalizations\textquotedblright\ of the
Secant-Newton
map. The above picture shows that the Secant-Newton map is contracting, that
is,
$L\leq L^{\prime}\leq\sqrt[n]{x}\leq U^{\prime}\leq U$. Furthermore, it
\textquotedblleft scales properly,\textquotedblright\  that is, if we
multiply\ $\sqrt[n]{x},\ L\ $and $U\ $by a number, say $s,\ $ then
$L^{\prime}$
and $U^{\prime}$ are also multiplied by $s.$ This is due to the fact
that
the
numerators are $n^{th}$ degree forms in $\sqrt[n]{x},\ L\ $and $U\ $and the
denominators are $n-1^{th}$ degree forms. These observations suggest the following
choice
of a search space: the family of all the maps with the form%
\begin{align*}
R &  :\;[L,U],x\mapsto\lbrack L^{\prime},U^{\prime}]\\
L^{\prime} &  =L+\frac{x+p_{0}L^{n}+p_{1}L^{n-1}U+...+p_{n-1}LU^{n-1}+p_{n}U^{n}}%
{p_{n+1}L^{n-1}+p_{n+2}L^{n-2}U+...+p_{2n-1}LU^{n-2}+p_{2n}U^{n-1}}\\
U^{\prime} &  =U+\frac{x+q_{0}U^{n}+q_{1}U^{n-1}L+...+q_{n-1}UL^{n-1}+q_{n}L^{n}}%
{q_{n+1}U^{n-1}+q_{n+2}U^{n-2}L+...+q_{2n-1}UL^{n-2}+q_{2n}L^{n-1}}%
\end{align*}
such that%
\[
L\leq L^{\prime}\leq\sqrt[n]{x}\leq U^{\prime}\leq U,
\]
which we will call \emph{contracting $n^{th}$ degree} maps. By choosing 
values
for the parameters $p=(p_{0},\ldots,p_{2n})$ and
$q=(q_{0},\ldots,q_{2n})$,
we
get each member of the family. For instance, the Secant-Newton map can be
obtained
by setting 
\begin{align*}
 p & =(-1,\underbrace{0,0,...,0}_\text{$n$},\underbrace{1,1,...,1}_\text{$n$})\\
 q & =(-1,\underbrace{0,0,...,0}_\text{$n$},n,\underbrace{0,0,...,0}_\text{$n$-1}). 
\end{align*}

The main contribution of this paper is the finding that the Secant-Newton
map is \emph{optimal} among all the contracting $n^{th}$ degree maps. By optimal,
we mean that the output interval of the Secant-Newton map is always a
subset
of the output interval of any  other contracting $n^{th}$ degree map, and in fact is almost always a proper subset.

This result generalizes   the previous result on the {\em square\/} root computation by 
Erascu-Hong \cite{erascu_square_2013}, where the Secant-Newton
map was shown to be optimal among all the contracting  {\em quadratic\/}  maps.
The new contributions, beyond the straightforward adaptation of \cite{erascu_square_2013},  are as follows.
\begin{itemize}
\item 
 We found that the precise notion of the ``optimality''  for the square-root case in \cite{erascu_square_2013}  could {\em not\/} be extended straightforwardly to the $n^{th}$ root case.  It had to be modified in a subtle but crucial way. 
See Theorem~\ref{thm:mainThm} (b). 
\item 
Furthermore, we found that the proof techniques used in~\cite{erascu_square_2013}    could {\em not\/} be  straightforwardly extended.  
In fact, it turns out that only a small part of the proof technique could  be straightforwardly generalized. However, the rest of the proof could not be generalized. Thus, we developed several new proof techniques.
See Lemmas \ref{lem:LL}, \ref{lem:closeU}, \ref{lem:UL},\ref{lem:(a)} and \ref{lem:(b)}.
\end{itemize}

The paper is structured as follows. In Section~\ref{sec:MainResult}, we
precisely state the main claim of the paper. In Section~\ref{sec:Proof},
we
prove the main claim.

\section{Main Result}

\label{sec:MainResult}In this section, we will make a precise statement of
the main result informally described in the previous section. For this, we
recall a few notions and notations.

\begin{definition}[$n^{th}$ degree map]
{\label{def:map} We say that a map 
\begin{equation*}
R:\;[L,U],x\;\;\mapsto \;\;[L^{\prime },U^{\prime }]
\end{equation*}%
is an \emph{$n^{th}$ degree map} if it has the following form
\begin{align*}
L^{\prime }& =L+\frac{x+p_{0}L^{n}+p_{1}L^{n-1}U+\cdots +p_{n}U^{n}}{%
p_{n+1}L^{n-1}+p_{n+2}L^{n-2}U+\cdots +p_{2n}U^{n-1}} \\
U^{\prime }& =U+\frac{x+q_{0}U^{n}+q_{1}U^{n-1}L+\cdots +q_{n}L^{n}}{%
q_{n+1}U^{n-1}+q_{n+2}U^{n-2}L+\cdots +q_{2n}L^{n-1}}.
\end{align*}%
We will denote such a map by $R_{p,q}$. }
\end{definition}

\begin{definition}[Secant-Newton map]
The \emph{Secant-Newton map} is the $n^{th}$ degree map $R_{p^{\ast },q^{\ast }}$
where $p^{\ast }=(-1,\underbrace{0,0,...,0}_{\text{n}},\underbrace{1,1,...,1}%
_{\text{n}})$ and $q^{\ast }=(-1,\underbrace{0,0,...,0}_{\text{n}},n,%
\underbrace{0,0,...,0}_{\text{n-1}})$, namely 
\begin{equation*}
R_{p^{\ast },q^{\ast }}:\;[L,U],x\;\;\mapsto \;\;[L^{\ast },U^{\ast }]
\end{equation*}%
where 
\begin{align*}
L^{\ast }& =L+\frac{x-L^{n}}{L^{n-1}+L^{n-2}U+\cdots +U^{n-1}} \\
U^{\ast }& =U+\frac{x-U^{n}}{nU^{n-1}}.
\end{align*}
\end{definition}

\begin{definition}[Contracting $n^{th}$ degree map]
\label{def:contracting} We say that a map 
\begin{equation*}
R:\;[L,U],x\;\;\mapsto\;\;[L^{\prime},U^{\prime}]
\end{equation*}
is a \emph{contracting $n^{th}$ degree map} if it is an $n^{th}$ degree map
and 
\begin{equation}
\underset{L,U,x}{\forall}\ \ 0<L\leq\sqrt[n]{x}\leq U\ \Longrightarrow\
L\leq L^{\prime}\leq\sqrt[n]{x}\leq U^{\prime}\leq U.  \label{eq:contraction}
\end{equation}
\end{definition}

\noindent Now we are ready to state the main result of the paper. %

\begin{theorem}[Main Result]
\label{thm:mainThm} Let $R_{p,q}$ be a contracting $\,$$n^{th}$ degree map
which is not $R_{p^{\ast},q^{\ast}}$ (Secant-Newton). Then we have

\begin{enumerate}
\item[(a)] {\small \textrm{$\;\;\underset{L,U,x}{\forall}\;\;\;\; 0<L\leq%
\sqrt[n]{x}\leq U \;\;\;\;\Longrightarrow\;\;\;\; R_{p^{*},q^{*}}([L,U],x)
\;\;\subseteq\;\; R_{p,q}([L,U],x) $ } }

\item[(b)] {\small \textrm{$\;\;\overset{o}{\underset{L,U,x}{\forall}}%
\;\;\;\;0<L\leq \sqrt[n]{x}\leq U\;\;\;\;\Longrightarrow
\;\;\;R_{p^{\ast},q^{\ast}}([L,U],x)\;\;\subsetneq\;\;R_{p,q}([L,U],x) $ } }
\end{enumerate}
\end{theorem}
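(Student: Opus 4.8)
The plan is to reduce both parts of the theorem to pointwise statements about the lower endpoints and the upper endpoints separately. Since $R_{p,q}([L,U],x) = [L',U']$ and $R_{p^*,q^*}([L,U],x) = [L^*,U^*]$, the inclusion $[L^*,U^*] \subseteq [L',U']$ is equivalent to $L' \leq L^*$ and $U^* \leq U'$. So I would first split the problem: prove that every contracting $n^{th}$ degree map satisfies $L' \leq L^*$ for all admissible $L,U,x$, and symmetrically $U' \geq U^*$. The key exploitable fact is that $L^* = \sqrt[n]{x}$ exactly when $L = U$ (i.e. when $L=U=\sqrt[n]{x}$), and more importantly that $L^*$ is the \emph{secant interpolant}, which for the convex function $t \mapsto t^n$ is the best possible lower bound obtainable from the data at the two endpoints. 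The contraction hypothesis $L' \leq \sqrt[n]{x}$ for \emph{all} $x \in [L^n, U^n]$ is a strong constraint: it forces the rational function $L + N_p(L,U,x)/D_p(L,U)$ (where $N_p$ is the numerator, $D_p$ the denominator) to lie below $\sqrt[n]{x}$ on a whole interval of $x$-values, and I expect this to pin down the behavior of the $p$-parameters.

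Concretely, for the lower-bound half I would fix $L$ and $U$ and view $L'$ as an affine function of $x$ (since $x$ appears linearly in the numerator and not at all in the denominator): $L'(x) = \dfrac{x}{D_p(L,U)} + c(L,U)$ for appropriate $D_p, c$. The curve $x \mapsto \sqrt[n]{x}$ is strictly concave on $x>0$. An affine function lying weakly below a strictly concave function on the interval $[L^n, U^n]$, and in addition satisfying $L'(L^n) \geq L$ (left contraction) — together with the requirement that equality-type behavior at the endpoints be forced — should force $L'(x)$ to be exactly the secant line through $(L^n, L)$ and $(U^n, U)$, which is precisely $L^*(x)$. I would carry this out by evaluating the contraction inequality at $x = L^n$ and at $x=U^n$ (the two extreme cases where $\sqrt[n]{x}$ equals $L$ or $U$ respectively), deducing $L'(L^n) = L$ and $L'(U^n) \leq U$; combined with concavity this gives $L'(x) \leq L^*(x)$ on the whole interval. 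This handles part (a) for the lower endpoint. The upper endpoint is genuinely different because $U^*$ is the \emph{Newton} (tangent) iterate, not a secant: here the relevant fact is that the tangent line to $t\mapsto t^n$ at $t=U$ is the best affine \emph{upper} bound valid near $U$, and since $x$ ranges over $[L^n,U^n]$ with $U$ the right endpoint, one uses that $U'(x)$ must be $\geq \sqrt[n]{x}$ for all such $x$ and $\geq$ becomes tight as $x \to U^n$ (where $\sqrt[n]{x}\to U$ and $U'=U$), forcing the derivative condition that identifies the Newton step. This is where Lemmas~\ref{lem:closeU} and \ref{lem:UL} presumably do the work, and I'd expect the asymmetry between secant-optimality (lower) and Newton-optimality (upper) to be the conceptual heart.

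For part (b), the strict inclusion on an open dense set of $(L,U,x)$, I would argue by contradiction: if $R_{p,q} \neq R_{p^*,q^*}$ then either the $p$-part or the $q$-part differs. From part (a) we already have $L' \leq L^*$ and $U^* \leq U'$ everywhere; strictness fails only on the locus where $L'=L^*$ and $U'=U'$ simultaneously. Since $L^* - L'$ is (after clearing denominators) a fixed polynomial/rational expression in $L,U,x$ that is not identically zero (because $p \neq p^*$ forces a genuine difference, using the analysis from part (a) that $p^*$ is the \emph{only} admissible lower-parameter vector), its zero set is a proper algebraic subset, hence nowhere dense; similarly for the $q$-side. The complement of a finite union of proper algebraic hypersurfaces in the region $\{0<L\leq\sqrt[n]{x}\leq U\}$ is open and dense, which is exactly the meaning of $\overset{o}{\forall}$. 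The subtlety flagged in the introduction — that the notion of optimality "had to be modified" for $n>2$ — presumably shows up here: for $n=2$ perhaps one got strictness off a codimension-one set in a cleaner way, whereas for general $n$ the map $R_{p,q}$ might agree with Secant-Newton on \emph{one} endpoint identically (e.g. $p = p^*$ but $q \neq q^*$), so the strict inclusion can only be claimed for the interval, and only generically; I would need to state carefully that $p=p^*$ is forced but $q$ has more freedom, or vice versa.

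The main obstacle I anticipate is establishing that the contraction constraint \eqref{eq:contraction} alone forces $p = p^*$ (rigidity of the secant side), and separately pinning down exactly how much freedom remains in $q$ — i.e.\ whether $q^*$ is likewise forced or whether there is a family of contracting $q$'s all dominated by Newton. Unlike the lower endpoint, where concavity of $\sqrt[n]{x}$ gives a clean "secant is the unique best chord" argument, the upper endpoint involves a tangent-type optimum, and the affine-in-$x$ structure interacts with the denominator $D_q(L,U)$ in a way that may admit more solutions; disentangling which coefficient patterns in $q$ are (i) contracting and (ii) dominated by $q^*$ is where I expect the new techniques of Lemmas~\ref{lem:LL}--\ref{lem:(b)} to be essential, and where a naive adaptation of the $n=2$ proof breaks down.
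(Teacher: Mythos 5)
Your reduction to the two endpoint inequalities $L'\leq L^{\ast}$ and $U^{\ast}\leq U'$ is exactly the paper's decomposition (Lemmas~\ref{lem:LL} and~\ref{lem:UL}). But your proposed proof of those two inequalities is genuinely different from the paper's, and in the lower-endpoint case, cleaner. The paper first nails down the numerator coefficients by polynomial-identity arguments (Lemmas~\ref{lem:C1}--\ref{lem:C2}), then proves $L'\leq L^{\ast}$ by a chain of algebraic manipulations culminating in the substitution $\sqrt[n]{x}=U$, and proves $U^{\ast}\leq U'$ via a separate continuity/contradiction lemma (Lemma~\ref{lem:closeU}) that effectively takes the limit $\sqrt[n]{x}\to U$. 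Your approach replaces all of this with a two-line geometric argument: $L'(x)$ is affine in $x$, the contraction condition at $x=L^{n}$ forces $L'(L^{n})=L$, the condition at $x=U^{n}$ forces $L'(U^{n})\leq U=L^{\ast}(U^{n})$, and since $L^{\ast}$ is precisely the chord through $(L^n,L)$ and $(U^n,U)$, affineness gives $L'\leq L^{\ast}$ on $[L^{n},U^{n}]$. The upper endpoint works the same way with the tangent line at $(U^{n},U)$: contraction forces $U'(U^{n})=U$, and if the slope of $U'$ exceeded $1/(nU^{n-1})$ then $U'$ would drop below the concave curve $\sqrt[n]{x}$ just left of $U^{n}$, contradicting contraction; hence $U'\geq U^{\ast}$. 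This subsumes Lemma~\ref{lem:closeU} by a one-sided derivative comparison instead of a limiting contradiction, and it subsumes Lemmas~\ref{lem:C1}--\ref{lem:C2} because you never need the explicit coefficient values, only the evaluations at $x=L^{n}$ and $x=U^{n}$. What the paper's route buys in exchange is that it never invokes differentiability or concavity of the root function and stays entirely within polynomial inequality manipulation, which matches the algebraic flavor of the rest of the argument; but logically your version is sound.

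There is, however, one conceptual error you should correct. You expect that the contraction constraint ``alone forces $p=p^{\ast}$ (rigidity of the secant side),'' and in part~(b) you invoke ``$p^{\ast}$ is the only admissible lower-parameter vector.'' This is false. Contraction pins down only the numerator coefficients (Lemma~\ref{lem:C1}); the denominator coefficients $p_{n+1},\dots,p_{2n}$ remain free subject only to $D_{p}(L,U)\geq D_{p^{\ast}}(L,U)$ on the cone $0<L\leq U$. For instance, $p_{n+1}=\dots=p_{2n}=2$ gives a contracting map with $p\neq p^{\ast}$. Fortunately this does not break your part~(b): all you actually need is that $p\neq p^{\ast}$ (together with the coefficient normalizations from Lemma~\ref{lem:C1}) makes $D_{p}-D_{p^{\ast}}$ a nonzero polynomial in $(L,U)$, so that $(x-L^{n})(D_{p}-D_{p^{\ast}})$ is a nonzero polynomial whose zero set has measure zero --- which is precisely the polynomial $f$ the paper writes down in Case~1 of Lemma~\ref{lem:(b)}. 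Replace the ``uniqueness'' justification with this direct one. Relatedly, your guess about why the statement had to be weakened for general $n$ (one endpoint agreeing ``identically'') is not quite right: the paper's Remark exhibits an explicit $n=3$ example with $0<L<\sqrt[3]{x}<U$ strict, $p\neq p^{\ast}$, $q\neq q^{\ast}$, and yet $R_{p,q}([L,U],x)=R_{p^{\ast},q^{\ast}}([L,U],x)$, because the form $(p_{n+1}-1)L^{n-1}+\cdots+(p_{2n}-1)U^{n-1}$ has nontrivial real zeros in the cone once $n\geq 3$. That is the reason $\forall$ with strict inequalities (adequate for $n=2$) must be replaced by $\overset{o}{\forall}$.
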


\begin{remark}
The small circle above the universal quantifier in the second claim
indicates that the statement holds for almost all values of $L, U, x$.
Equivalently, this means the set of exceptions has measure zero.
\end{remark}

\begin{remark}
\textrm{The first claim states that the Secant-Newton map is never worse
than any other contracting $n^{th}$ degree map. The second claim states that
the Secant-Newton map is almost always better than all the other contracting 
$\,$$n^{th}$ degree maps.}
\end{remark}

\begin{remark}
This paper is a sequel to the square root case study of
Erascu-Hong \cite{erascu_square_2013}. Erascu-Hong proved the following two
results for the square-root case $\left( n=2\right) $: 
\begin{eqnarray*}
\underset{L,U,x}{\forall }\;\;\;\;0 &<&L\leq \sqrt[2]{x}\leq
U\;\;\;\;\Longrightarrow \;\;\;\;R_{p^{\ast },q^{\ast
}}([L,U],x)\;\;\subseteq \;\;R_{p,q}([L,U],x) \\
\underset{L,U,x}{\forall }\;\;\;\;0 &<&L<\sqrt[2]{x}<U\;\;\;\;%
\Longrightarrow \;\;\;\;R_{p^{\ast },q^{\ast }}([L,U],x)\;\;\subsetneq
\;\;R_{p,q}([L,U],x)
\end{eqnarray*}%
These look very similar to the two claims in Theorem \ref{thm:mainThm}
above for the $n\,^{th}$ root case (arbitrary $n$). In fact, the first
claims look identical to each other, which means that the claim for the $n=2$
generalizes to arbitrary $n$ without any change. On the other hand, the
second claims have subtle but important differences. 
\begin{equation*}
\begin{array}{lll}
n=2 & : & \underset{L,U,x}{\forall }\;\;\;\cdots \ \ L<\sqrt[2]{x}<U\ \cdots
\\ 
n=\text{arbitrary} & : & \overset{o}{\underset{L,U,x}{\forall }}\;\;\;\cdots
\ \ L\leq \sqrt[n]{x}\leq U\ \cdots%
\end{array}%
\end{equation*}%
Note that $\forall $ is replaced with $\overset{o}{\forall }$ and $<$ with $%
\ \leq .$ These subtle changes are necessary because, to our surprise, the
claim for $n=2$ does not hold in general. \noindent For example, consider
the following $n=3$ case: 
\[
p=(-1,0,0,0,2,\frac{1}{2},1), \ q=(1,0,0,0,3,0,0),\ L=1, \ U=2, \ x=\left(\frac{3}{2}\right)^{3}
\]  
This implies
$0<L<\sqrt[3]{x}<U$, and
\begin{align*}
&  R_{p^{\ast},q^{\ast}}([L,U],x)\;\;=\;\;R_{p,q}([L,U],x)\\
\iff &  L^{\prime}=L^{\ast}\;\wedge\;U^{\ast}=U^{\prime}\\
\iff &
\begin{array}
[t]{l}%
L+\frac{x-L^{3}}{p_{4}L^{2}+p_{5}LU+p_{6}U^{2}}=L+\frac{x-L^{3}}%
{L^{2}+LU+U^{2}}\\
\wedge\\
U+\frac{x-U^{3}}{3U^{2}}=U+\frac{x-U^{3}}{q_{4}U^{2}+q_{5}UL+q_{6}L^{2}}%
\end{array}
\ \ \
\begin{array}
[t]{l}%
\\
\text{{}}\\
\\
\end{array}
\\
\iff &
\begin{array}
[t]{l}%
p_{4}L^{2}+p_{5}LU+p_{6}U^{2}=L^{2}+LU+U^{2}\\
\wedge\\
q_{4}U^{2}+q_{5}UL+q_{6}L^{2}=3U^{2}%
\end{array}
\\
\iff &
\begin{array}
[t]{l}%
(p_{4}-1)L^{2}+(p_{5}-1)LU+(p_{6}-1)U^{2}=0\\
\wedge\\
(q_{4}-3)U^{2}+q_{5}UL+q_{6}L^{2}=0
\end{array}\\
\iff &
\begin{array}
[t]{l}%
(2-1)(1)^{2}+(\frac{1}{2}-1)(1)(2)+(1-1)(2)^{2}=0\\
\wedge\\
(3-3)(2)^{2}+(3)(1)(2)+(0)(1)^{2}=0
\end{array}\\
\iff &  0=0\;\;\;\;\wedge\;\;\;\;0=0\\
\iff &  true.
\end{align*}
\end{remark}

\section{Proof}
\label{sec:Proof} In this section, we will prove the main result (Theorem~%
\ref{thm:mainThm}). For the sake of easy readability, the proof will be
divided into several lemmas, which are interesting on their own. The main
theorem follows immediately from the last two lemmas (Lemmas \ref{lem:(a)}
and \ref{lem:(b)}).

This paper is a sequel to \cite{erascu_square_2013} where square-root ($n=2$) was considered. Hence we initially hoped that the proof techniques developed in \cite{erascu_square_2013} would be  generalizable straightforwardly to the $n^{th}$ root case.  
It turns out that a part of the proof could  indeed be straightforwardly generalized (Lemmas~\ref{lem:C1} and~\ref{lem:C2}). However, the rest of the proof could not be generalized at all. Thus, we had to develop several new proof techniques 
(Lemmas~\ref{lem:LL},~\ref{lem:closeU},~\ref{lem:UL},~\ref{lem:(a)},~\ref{lem:(b)}).

\begin{lemma}
\label{lem:C1}Let $R_{p,q}$ be a contracting $\,$$n^{th}$ degree map. Then
we have 
\begin{eqnarray*}
p_{0} &=&-1\ \wedge \ \ p_{1}=\cdots =p_{n}=0, \\
q_{0} &=&-1\ \wedge \ \ q_{1}=\cdots =q_{n}=0.
\end{eqnarray*}
\end{lemma}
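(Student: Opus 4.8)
The plan is to exploit the contraction inequalities \eqref{eq:contraction} in the two extreme regimes where $L$ and $U$ are very small or very large, forcing the coefficients of the top-degree and near-top-degree terms in the numerators to behave in a controlled way. Concretely, for the lower bound I would write the contraction condition $L\le L'\le\sqrt[n]{x}$ as a pair of inequalities on the rational expression for $L'$, clear the (positive) denominator, and obtain that the polynomial $N_L(L,U,x):=x+p_0L^n+p_1L^{n-1}U+\cdots+p_nU^n$ must satisfy $N_L\ge 0$ whenever $0<L\le\sqrt[n]{x}\le U$ (from $L'\ge L$) and $L+N_L/D_L\le\sqrt[n]{x}$, i.e. $N_L\le (\sqrt[n]{x}-L)D_L$, on the same region (from $L'\le\sqrt[n]{x}$). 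The key observation is that along the ``diagonal'' $L=U=\sqrt[n]{x}$, we have $\sqrt[n]{x}-L=0$, so the second inequality degenerates and pins down $N_L$ on that curve; combined with the sign condition from the first inequality we should be able to conclude $N_L$ vanishes identically on $L=U$, i.e. $x+(p_0+p_1+\cdots+p_n)L^n=0$ whenever $x=L^n$, giving $p_0+p_1+\cdots+p_n=-1$.

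To separate the individual coefficients $p_0,\dots,p_n$ rather than just their sum, I would then perturb off the diagonal. Fix $U=\sqrt[n]{x}$ (the largest allowed value of the ratio's arguments consistent with $\sqrt[n]{x}\le U$ being tight) and let $L$ range in $(0,\sqrt[n]{x}]$; then $N_L = x + p_0 L^n + p_1 L^{n-1}U + \cdots + p_n U^n$ becomes a univariate polynomial in $L$ of degree $n$ with $x = U^n$, and the requirement $0\le N_L\le(\sqrt[n]{x}-L)D_L$ on the whole interval $L\in(0,U]$, where the right side vanishes to first order at $L=U$, forces $N_L$ to vanish at $L=U$ and, together with the nonnegativity, to have appropriate multiplicity/sign. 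Scaling homogeneity (set $U=1$, $x=1$) reduces this to a clean one-variable statement: $1 + p_0 L^n + \cdots + p_n$ must be squeezed between $0$ and $(1-L)\cdot(\text{positive})$ for all $L\in(0,1]$. Taking $L\to 0$ gives $1+p_n\ge 0$ and $\le(\text{finite})$, and more careful use of the factor $(1-L)$ near $L=1$ forces $1+p_0+\cdots+p_n=0$ with the polynomial nonnegative on $[0,1]$ and vanishing at the right endpoint; pairing this with the analogous statement where instead we scale so the contraction near a different tight configuration isolates each $p_i$, or by differentiating the one-sided constraint at $L=1$, should yield $p_1=\cdots=p_n=0$ and hence $p_0=-1$. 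The statement for $q$ is obtained by the symmetric argument applied to $U'$, exchanging the roles of $L$ and $U$ and using $\sqrt[n]{x}\le U'\le U$ together with the degeneracy on $L=U=\sqrt[n]{x}$.

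The main obstacle I anticipate is the step that goes from ``the sum $p_0+\cdots+p_n=-1$ and $N_L\ge 0$ on the region'' to ``each intermediate $p_i$ is zero.'' Nonnegativity of a degree-$n$ polynomial on an interval does not by itself kill its lower-order coefficients, so one really needs the two-sided squeeze: $N_L$ is trapped between $0$ and something that vanishes at $L=U$, which forces $N_L(U)=0$ and $N_L$ to be nonnegative near $L=U$, hence $N_L$ has a root of even order or matches the vanishing order of the upper bound there. Extracting from this that the entire ``middle'' of the coefficient vector is zero will require carefully choosing several specializations of $(L,U,x)$ (or differentiating the inequality in a parameter) so that the constraints become rigid enough. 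I would expect to need a short auxiliary computation showing that a form like $L^{n-k}U^k$ for $0<k<n$ cannot appear without violating one of the two one-sided bounds in some limiting configuration (e.g. $L\to 0$ with $U$ fixed, versus $L\to U$), and this bookkeeping — rather than any deep idea — is the delicate part.
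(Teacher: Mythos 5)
Your plan starts from the right place — instantiating the contraction condition at special $(L,U,x)$ and clearing denominators — but you pick the wrong specialization for the $p$-half, and that is exactly where your proposal bogs down. You specialize along the curve $L=U=\sqrt[n]{x}$, which only pins down the single combination $p_0+\cdots+p_n=-1$, and then try to recover the individual $p_i$ by a squeeze $0\le N_L\le(\sqrt[n]{x}-L)D_L$ with $x=U^n$ fixed; as you yourself note, this is a two-sided inequality rather than a vanishing condition, and extracting $p_1=\cdots=p_n=0$ from it requires the delicate bookkeeping you flag as the ``main obstacle.'' The paper's proof bypasses all of that by specializing $x=L^n$ with $U$ left \emph{free}: then $\sqrt[n]{x}=L$, so the two bounds $L\le L'$ and $L'\le\sqrt[n]{x}$ pinch together and force the \emph{equality} $L'=L$ for every $(L,U)$ with $0<L\le U$. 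Clearing the (nonzero) denominator yields that the bivariate polynomial $g(L,U)=(1+p_0)L^n+p_1L^{n-1}U+\cdots+p_nU^n$ vanishes identically on a two-dimensional open region, so every coefficient is zero — no perturbation off the diagonal, no limiting configurations, no one-sided bound analysis. Symmetrically, $x=U^n$ forces $U'=U$ and gives $q_0=-1$, $q_1=\cdots=q_n=0$; your instinct to set $x=U^n$ was the right one for the $q$-claim, but for the $p$-claim you need the opposite corner $x=L^n$. The moral is that you should look for a specialization of positive codimension in $(L,U,x)$-space (a surface, not a curve) on which the contraction condition degenerates to an equality, because that is what turns a sign constraint into the identical vanishing of a polynomial.
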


\begin{proof}
Let $R_{p,q}$ be a contracting $\,n^{th}$ degree map. Then $p,q$ satisfy the
condition~\eqref{eq:contraction}. The proof essentially consists of
instantiating~the condition \eqref{eq:contraction} on $x=L^{n}$ and $%
x=U^{n}. $

By instantiating the condition \eqref{eq:contraction} with $x=L^{n}$ and
recalling the definition of $L^{\prime },$ we have 
\begin{equation*}
\underset{L,U}{\forall }\ \ 0<L\leq U\ \Longrightarrow \ L+\frac{%
L^{n}+p_{0}L^{n}+p_{1}L^{n-1}U+\cdots +p_{n}U^{n}}{%
p_{n+1}L^{n-1}+p_{n+2}L^{n-2}U+\cdots +p_{2n}U^{n-1}}=L.
\end{equation*}%
By simplifying, removing the denominator and collecting, we have 
\begin{equation*}
\underset{L,U}{\forall }\ \ \left( L,U\right) \in D\ \Longrightarrow g\left(
L,U\right) =0,
\end{equation*}%
where%
\begin{align*}
D& =\left\{ \left( L,U\right) :0<L\leq U\right\} , \\
g\left( L,U\right) & =\left( 1+p_{0}\right) L^{n}+p_{1}L^{n-1}U+\cdots
+p_{n}U^{n}.
\end{align*}%
Since the bivariate polynomial $g$ is zero over the 2-dim real domain $D$,
it must be identically zero. Thus its coefficients $1+p_{0},\ p_{1},\ldots
,p_{n}\ $ must be all zero.

By instantiating the condition \eqref{eq:contraction} with $x=U^{n}$ and
recalling the definition of $U^{\prime }$, we have 
\begin{equation*}
\underset{L,U}{\forall }\ \ 0<L\leq U\ \Longrightarrow \ U+\frac{%
U^{n}+q_{0}U^{n}+q_{1}U^{n-1}L+\cdots +q_{n}L^{n}}{%
q_{n+1}U^{n-1}+q_{n+2}U^{n-2}L+\cdots +q_{2n}L^{n-1}}=U.
\end{equation*}%
By simplifying, removing the denominator and collecting, we have 
\begin{equation*}
\underset{L,U}{\forall }\ \ \left( L,U\right) \in D\ \Longrightarrow \
h\left( L,U\right) =0,
\end{equation*}%
where 
\begin{align*}
D& =\left\{ \left( L,U\right) :0<L\leq U\right\} , \\
h\left( L,U\right) & =\left( 1+q_{0}\right) U^{n}+q_{1}U^{n-1}L+\cdots
+q_{n}L^{n}.
\end{align*}%
Since the bivariate polynomial $h$ is zero over the 2-dim real domain $D$, \
it must be identically zero. Thus its coefficients $1+q_{0},\ q_{1},\ldots
,q_{n}\ $ must be all zero.
\end{proof}

\begin{lemma}
\label{lem:C2} Let $R_{p,q}$ be a contracting $\,n^{th}$ degree map. Then we have 
\begin{align*}
L^{\prime }& =L+\frac{x-L^{n}}{p_{n+1}L^{n-1}+p_{n+2}L^{n-2}U+\cdots
+p_{2n}U^{n-1}} \\
U^{\prime }& =U+\frac{x-U^{n}}{q_{n+1}U^{n-1}+q_{n+2}U^{n-2}L+\cdots
+q_{2n}L^{n-1}}.
\end{align*}
\end{lemma}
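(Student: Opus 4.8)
The plan is to derive Lemma~\ref{lem:C2} as an immediate corollary of Lemma~\ref{lem:C1}. Lemma~\ref{lem:C1} tells us that for any contracting $n^{th}$ degree map $R_{p,q}$ we must have $p_0=-1$ and $p_1=\cdots=p_n=0$, and likewise $q_0=-1$ and $q_1=\cdots=q_n=0$. So the only work is substitution into the defining formulas of Definition~\ref{def:map}.

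First I would write out the numerator of $L'$ from Definition~\ref{def:map}, namely $x+p_0L^n+p_1L^{n-1}U+\cdots+p_nU^n$, and substitute the values $p_0=-1$, $p_1=\cdots=p_n=0$ supplied by Lemma~\ref{lem:C1}. Every term except $x$ and $p_0L^n=-L^n$ vanishes, so the numerator collapses to $x-L^n$, while the denominator $p_{n+1}L^{n-1}+p_{n+2}L^{n-2}U+\cdots+p_{2n}U^{n-1}$ is untouched since Lemma~\ref{lem:C1} says nothing about $p_{n+1},\ldots,p_{2n}$. This gives the claimed expression for $L'$.

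Then I would do the symmetric computation for $U'$: the numerator $x+q_0U^n+q_1U^{n-1}L+\cdots+q_nL^n$ becomes $x-U^n$ after substituting $q_0=-1$ and $q_1=\cdots=q_n=0$, leaving the denominator $q_{n+1}U^{n-1}+q_{n+2}U^{n-2}L+\cdots+q_{2n}L^{n-1}$ unchanged. This yields the claimed expression for $U'$, completing the proof.

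There is essentially no obstacle here — the lemma is a bookkeeping restatement packaging the conclusion of Lemma~\ref{lem:C1} into a more usable normal form for the contracting maps, so that subsequent lemmas can work with the cleaner parametrization in which only the $2n$ denominator coefficients $p_{n+1},\ldots,p_{2n}$ and $q_{n+1},\ldots,q_{2n}$ remain free. The one thing worth stating carefully is that the denominators are genuinely left alone, since the contraction hypothesis was only strong enough (via the instantiations $x=L^n$ and $x=U^n$) to pin down the numerator coefficients.
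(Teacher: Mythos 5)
Your proof is correct and follows exactly the same route as the paper: invoke Lemma~\ref{lem:C1} to fix $p_0=-1$, $p_1=\cdots=p_n=0$, $q_0=-1$, $q_1=\cdots=q_n=0$, and substitute into the formulas from Definition~\ref{def:map} so the numerators collapse to $x-L^n$ and $x-U^n$ while the denominators remain free. Your added remark about why the denominator coefficients are untouched is a helpful clarification but does not change the argument.
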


\begin{proof}
Let $R_{p,q}$ be a contracting $\,n^{th}$ degree map. From Lemma \ref{lem:C1}, we
have 
\begin{eqnarray*}
p_{0} &=&-1\ \wedge \ \ p_{1}=\cdots =p_{n}=0 \\
q_{0} &=&-1\ \wedge \ \ q_{1}=\cdots =q_{n}=0
\end{eqnarray*}%
Recalling the definition of $L^{\prime }$ and $U^{\prime }$, we have%
\begin{align*}
L^{\prime }& =L+\frac{x-L^{n}}{p_{n+1}L^{n-1}+p_{n+2}L^{n-2}U+\cdots
+p_{2n}U^{n-1}} \\
U^{\prime }& =U+\frac{x-U^{n}}{q_{n+1}U^{n-1}+q_{n+2}U^{n-2}L+\cdots
+q_{2n}L^{n-1}}.
\end{align*}
\end{proof}

\begin{lemma}
\label{lem:LL} Let $R_{p,q}$ be a contracting $n^{th}$ degree map. Then we
have 
\begin{equation*}
\underset{L,U,x}{\forall }\;\;\;\;0<L\leq \sqrt[n]{x}\leq
U\;\;\;\;\Longrightarrow \;\;\;\;L^{\prime }\leq L^{\ast }.
\end{equation*}
\end{lemma}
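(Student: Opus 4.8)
The plan is to reduce the inequality $L' \le L^\ast$ to a polynomial inequality in $L$, $U$, and $\sqrt[n]{x}$, and then to exploit the constraints on the parameters $p_{n+1},\dots,p_{2n}$ that follow from the contraction hypothesis. Writing $r = \sqrt[n]{x}$, so $x = r^n$ with $L \le r \le U$, by Lemma \ref{lem:C2} we have $L' = L + \dfrac{r^n - L^n}{D_p(L,U)}$ where $D_p(L,U) = p_{n+1}L^{n-1} + p_{n+2}L^{n-2}U + \cdots + p_{2n}U^{n-1}$, and similarly $L^\ast = L + \dfrac{r^n - L^n}{L^{n-1} + L^{n-2}U + \cdots + U^{n-1}}$. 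Since $r^n - L^n \ge 0$ on the domain, the claim $L' \le L^\ast$ is equivalent, whenever $r > L$, to $D_p(L,U) \ge L^{n-1} + L^{n-2}U + \cdots + U^{n-1}$, i.e.
\begin{equation*}
(p_{n+1}-1)L^{n-1} + (p_{n+2}-1)L^{n-2}U + \cdots + (p_{2n}-1)U^{n-1} \;\ge\; 0
\qquad\text{for all } 0 < L \le U.
\end{equation*}
(One must first check the denominators are positive — this should follow from feeding $x$ slightly above $L^n$ into the contraction condition $L \le L' \le \sqrt[n]{x}$, forcing $D_p(L,U) > 0$ on $0 < L \le U$.)

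So the heart of the lemma is: \emph{the contraction hypothesis forces $(p_{n+1}-1)L^{n-1} + \cdots + (p_{2n}-1)U^{n-1} \ge 0$ on $0 < L \le U$.} First I would extract what contraction says about $D_p$. Instantiating $L \le L' \le r$ with a general $x = r^n$ in the range $L^n \le x \le U^n$ gives $\dfrac{r^n - L^n}{D_p(L,U)} \le r - L$, equivalently (using $r^n - L^n = (r-L)(r^{n-1} + r^{n-2}L + \cdots + L^{n-1})$ and dividing by $r - L > 0$)
\begin{equation*}
r^{n-1} + r^{n-2}L + \cdots + L^{n-1} \;\le\; D_p(L,U) \qquad\text{for all } L < r \le U.
\end{equation*}
Now I would take the supremum of the left side over $r \in (L, U]$: since every term is increasing in $r$, the supremum is attained at $r = U$, giving exactly $U^{n-1} + U^{n-2}L + \cdots + L^{n-1} \le D_p(L,U)$, which is precisely the inequality we wanted. (Taking $r \to U$ and using continuity of $D_p$ in no variable other than through the fixed $L,U$ handles the boundary.) This also needs the case $r = L$, i.e. $x = L^n$: then $r^n - L^n = 0$ and $L' = L = L^\ast$, so the inequality $L' \le L^\ast$ holds trivially.

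The main obstacle I anticipate is the clean handling of the reduction step — specifically, justifying that dividing the inequality $L' \le L^\ast$ through by the common factor $r^n - L^n \ge 0$ and by the denominators is legitimate. This requires knowing \emph{a priori} that both $D_p(L,U)$ and $L^{n-1} + \cdots + U^{n-1}$ are strictly positive on the relevant domain; the latter is obvious, but the former must be squeezed out of the contraction condition first, and one has to be careful that the contraction condition as stated only directly controls things when $L \le \sqrt[n]{x} \le U$, i.e. for $x$ in a bounded range, not for all $x > 0$. Pinning down that $D_p > 0$ on all of $0 < L \le U$ — rather than just on a sub-locus — from the "local" information supplied by contraction is the one spot where a short separate argument (a limiting argument as $x \downarrow L^n$, or a continuity/sign argument) is genuinely needed; the rest is the routine telescoping identity for $r^n - L^n$ and a monotonicity observation.
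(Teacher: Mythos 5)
Your proposal follows essentially the same route as the paper: use Lemma~\ref{lem:C2} to rewrite $L'$ and $L^\ast$, feed the contraction inequality $L' \le \sqrt[n]{x}$ back in, factor $x - L^n$, specialize to $\sqrt[n]{x} = U$ to get the sharpest constraint on the denominator $D_p(L,U) = p_{n+1}L^{n-1}+\cdots+p_{2n}U^{n-1}$, and re-insert to compare $L'$ with $L^\ast$. The one place where the paper's bookkeeping is cleaner is exactly the spot you flag: instead of first establishing $D_p > 0$ so that $\tfrac{\mathrm{sum}}{D_p} \le 1$ can be turned into $\mathrm{sum} \le D_p$, the paper passes directly to the reciprocal inequality $\tfrac{1}{D_p} \le \tfrac{1}{\mathrm{sum}}$, which follows from $\tfrac{\mathrm{sum}}{D_p} \le 1$ and $\mathrm{sum} > 0$ \emph{regardless} of the sign of $D_p$, and then multiplies through by $x - L^n \ge 0$; positivity of $D_p$ is never needed. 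Your sketch for $D_p > 0$ (take $x$ with $L^n < x \le U^n$ and $L<U$, so $x - L^n > 0$ and $L \le L'$ forces $D_p > 0$) is correct and would close that gap, but the reciprocal trick makes it unnecessary; also, the supremum/limit language in your last step can be dropped, since $r = U$ is itself an admissible value of $\sqrt[n]{x}$ and can simply be substituted, which is what the paper does.
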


\begin{proof}
Let $R_{p,q}$ be a contracting $n^{th}$ degree map. Then we have%
\begin{equation*}
\underset{L,U,x}{\forall }\ \ 0<L\leq \sqrt[n]{x}\leq U\ \Longrightarrow \
L^{\prime }\leq \sqrt[n]{x}.
\end{equation*}%
From Lemma $\ref{lem:C2},$ we have%
\begin{eqnarray*}
\underset{L,U,x}{\forall }\ \ 0 &<&L\leq \sqrt[n]{x}\leq U\ \Longrightarrow
\ L+\frac{x-L^{n}}{p_{n+1}L^{n-1}+p_{n+2}L^{n-2}U+\cdots +p_{2n}U^{n-1}}\leq 
\sqrt[n]{x} \\
\underset{L,U,x}{\forall }\ \ 0 &<&L\leq \sqrt[n]{x}\leq U\ \Longrightarrow
\ \frac{x-L^{n}}{p_{n+1}L^{n-1}+p_{n+2}L^{n-2}U+\cdots +p_{2n}U^{n-1}}\leq 
\sqrt[n]{x}-L \\
\underset{L,U,x}{\forall }\ \ 0 &<&L\leq \sqrt[n]{x}\leq U\ \Longrightarrow
\ \frac{\left( \sqrt[n]{x}-L\right) \left( \sqrt[n]{x}^{n-1}+\sqrt[n]{x}%
^{n-2}L+\cdots +L^{n-1}\right) }{p_{n+1}L^{n-1}+p_{n+2}L^{n-2}U+\cdots
+p_{2n}U^{n-1}}\leq \sqrt[n]{x}-L
\end{eqnarray*}%
By considering only the case $L<\sqrt[n]{x},$ we have%
\begin{equation*}
\underset{L,U,x}{\forall }\ \ 0<L<\sqrt[n]{x}\leq U\ \Longrightarrow \ \frac{%
\sqrt[n]{x}^{n-1}+\sqrt[n]{x}^{n-2}L+\cdots +L^{n-1}}{%
p_{n+1}L^{n-1}+p_{n+2}L^{n-2}U+\cdots +p_{2n}U^{n-1}}\leq 1
\end{equation*}%
Since $\sqrt[n]{x}^{n-1}+\sqrt[n]{x}^{n-2}L+\cdots +L^{n-1}>0\ $for $\ 0<L<%
\sqrt[n]{x},$ we have%
\begin{equation*}
\underset{L,U,x}{\forall }\ \ 0<L<\sqrt[n]{x}\leq U\ \Longrightarrow \ \frac{%
1}{p_{n+1}L^{n-1}+p_{n+2}L^{n-2}U+\cdots +p_{2n}U^{n-1}}\leq \frac{1}{\sqrt[n%
]{x}^{n-1}+\sqrt[n]{x}^{n-2}L+\cdots +L^{n-1}}
\end{equation*}%
By considering only the case $\sqrt[n]{x}=U,$ we have%
\begin{equation*}
\underset{L,U}{\forall }\ \ 0<L<U\ \Longrightarrow \ \frac{1}{%
p_{n+1}L^{n-1}+p_{n+2}L^{n-2}U+\cdots +p_{2n}U^{n-1}}\leq \frac{1}{%
U^{n-1}+U^{n-2}L+\cdots +L^{n-1}}
\end{equation*}
Since $x-L^{n}\geq 0$ for $L\leq \sqrt[n]{x},$ we have%
\begin{equation*}
\underset{L,U,x}{\forall }\;\;\;\;0<L\leq \sqrt[n]{x}\leq U\;\wedge\ L<U\
\Longrightarrow \ \frac{x-L^{n}}{p_{n+1}L^{n-1}+p_{n+2}L^{n-2}U+\cdots
+p_{2n}U^{n-1}}\leq \frac{x-L^{n}}{U^{n-1}+U^{n-2}L+\cdots +L^{n-1}}
\end{equation*}%
Since $x-L^{n}=0$ when $L=U,$ we have%
\begin{equation*}
\underset{L,U,x}{\forall }\;\;\;\;0<L\leq \sqrt[n]{x}\leq U\;\Longrightarrow
\ \frac{x-L^{n}}{p_{n+1}L^{n-1}+p_{n+2}L^{n-2}U+\cdots +p_{2n}U^{n-1}}\leq 
\frac{x-L^{n}}{U^{n-1}+U^{n-2}L+\cdots +L^{n-1}}
\end{equation*}%
By adding $L$ on both sides, we have%
\begin{equation*}
\underset{L,U,x}{\forall }\;\;\;\;0<L\leq \sqrt[n]{x}\leq U\ \Longrightarrow
L+\ \frac{x-L^{n}}{p_{n+1}L^{n-1}+p_{n+2}L^{n-2}U+\cdots +p_{2n}U^{n-1}}\leq
L+\frac{x-L^{n}}{U^{n-1}+U^{n-2}L+\cdots +L^{n-1}}
\end{equation*}%
Thus%
\begin{equation*}
\underset{L,U,x}{\forall }\;\;\;\;\;\;\;0<L\leq \sqrt[n]{x}\leq
U\Longrightarrow L^{\prime }\leq L^{\ast }.
\end{equation*}
\end{proof}

\begin{lemma}
\label{lem:closeU}
If\[ 
\underset{L,U,x}{\forall }\;\;\;\; 0<L\leq \sqrt[n]{x}<U\;\Longrightarrow A\geq B
\]
then 
\[
\underset{L,U}{\forall }\;\;\;\;0 <L<U\;\Longrightarrow C\geq B
\]
where 
\begin{align*}
A & = \frac{1}{U^{n-1}+U^{n-2}\sqrt[n]{x}+\cdots +\sqrt[n]{x}^{n-1}},  \\
B &=\frac{1}{%
q_{n+1}U^{n-1}+q_{n+2}U^{n-2}L+\cdots +q_{2n}L^{n-1}}, \\
C & =\frac{1}{nU^{n-1}}. 
\end{align*}
\end{lemma}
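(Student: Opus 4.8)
The plan is to exploit the asymmetry in the statement: although the hypothesis quantifies over $L$, $U$ and $x$, the quantity $B$ appearing in the conclusion depends only on $L$ and $U$, not on $x$. This lets me freeze $L$ and $U$ (with $0<L<U$) and use up the remaining freedom in $x$ by a limiting argument. Concretely, I would regard $t:=\sqrt[n]{x}$ as a free parameter ranging over the half-open interval $[L,U)$ and set
\[
A(t)\;=\;\frac{1}{U^{n-1}+U^{n-2}t+\cdots+t^{n-1}}.
\]
For each $t\in[L,U)$, taking $x=t^{n}$ gives $0<L\le\sqrt[n]{x}=t<U$, so the hypothesis applies and yields $A(t)\ge B$.

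Next I would let $t\to U^{-}$. The denominator $f(t):=U^{n-1}+U^{n-2}t+\cdots+t^{n-1}$ is a polynomial, and it is strictly positive on the closed interval $[L,U]$ (every summand is positive since $t,U>0$), so $A(t)=1/f(t)$ is continuous on all of $[L,U]$. Since the continuous function $A(t)-B$ is nonnegative on the dense subset $[L,U)$ of $[L,U]$, it is nonnegative at $t=U$ as well, i.e. $A(U)\ge B$. A direct evaluation gives
\[
A(U)\;=\;\frac{1}{\,U^{n-1}+U^{n-1}+\cdots+U^{n-1}\,}\;=\;\frac{1}{nU^{n-1}}\;=\;C,
\]
so $C\ge B$. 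Since $L$ and $U$ were arbitrary subject to $0<L<U$, this is exactly the claimed implication.

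I do not expect a genuine obstacle here; it is essentially a one-variable continuity argument. The only two points that need a moment's care are: (i) that the hypothesis may legitimately be invoked for values of $\sqrt[n]{x}$ arbitrarily close to $U$ — which holds precisely because $L<U$ makes $[L,U)$ accumulate at $U$ — and (ii) that the limiting value of $A$ really is $C$, which is just the observation that each of the $n$ terms $U^{n-1-j}t^{\,j}$ of $f(t)$ degenerates to $U^{n-1}$ as $t\to U$, so $f(U)=nU^{n-1}$.
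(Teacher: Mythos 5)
Your proof is correct, and it rests on the same observation that the paper's proof exploits: the quantity $A$, viewed as a function of $t=\sqrt[n]{x}$ on $[L,U]$, is continuous (the denominator is a polynomial that is strictly positive there), it satisfies $A(U)=C$, and the hypothesis forces $A(t)\ge B$ for every $t\in[L,U)$, so the bound must persist at the endpoint. Where you differ is in packaging. You take the direct route: $B$ is independent of $t$, $[L,U)$ accumulates at $U$, so let $t\to U^-$ and use continuity to pass the inequality to the limit. The paper instead argues by contradiction: it assumes $C<B$, sets $d=B-C>0$, and explicitly constructs a value $\sqrt[n]{x}\in[L,U)$ for which $A<B$, splitting into the cases $f(L)<0$ and $f(L)\ge 0$ and, in the second case, invoking monotonicity of $f$ on $[L,U]$ and the sign change $f(U)=-d<0$ to locate a root $\alpha$ and then choosing $\sqrt[n]{x}=(U+\alpha)/2$. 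Your version is the more streamlined of the two, avoiding the case split and the explicit witness construction; the paper's version has the minor merit of exhibiting a concrete violating $x$, but both are ultimately the same one-variable continuity argument at the boundary $t=U$.
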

\begin{proof}
Assume
\begin{equation}
\underset{L,U,x}{\forall }\;\;\;\; 0<L\leq \sqrt[n]{x}<U\;\Longrightarrow A\geq B.
\label{eq:assume}
\end{equation}
We need to show 
\[
\underset{L,U}{\forall }\;\;\;\;0 <L<U\;\Longrightarrow C\geq B.
\]
Let $L, U$ be arbitrary but fixed such that $0<  L < U$. We need to prove that $C\geq~B$.
We will prove by contradiction, and thus assume $C <B$.
In order to derive a contradiction, we will try to find $\sqrt[n]{x}$ such that $0<L\leq \sqrt[n]{x}<U$ is true   but $A \geq B$ is false, which contradicts the assumption~\eqref{eq:assume}. 

Let $B-C=d$.  If $A-C <d$ then  $A \geq B$ is false. Thus it suffices to  find $\sqrt[n]{x}$ such that $0<L\leq \sqrt[n]{x}<U$ and $A - C < d$, that is,  $f(\sqrt[n]{x}) <0$ where
\begin{equation*}
f(z)=\frac{1}{U^{n-1}+U^{n-2}z+\cdots +z^{n-1}}-\frac{1}{nU^{n-1}}-d.
\end{equation*}

\noindent We consider two cases:
\begin{glists}{1em}{4em}{0em}{0.2em}
\iteml{Case 1:} $f(L) < 0$. Let $\sqrt[n]{x}=L$. Trivially $f(\sqrt[n]{x}) <0$.
\iteml{Case 2:} $f(L) \geq 0$.   
It is obvious that $f$ is continuous and monotonically decreasing over   $[L,U]$.
It is also obvious that $f(U) = -d<0$. Hence there exists a unique real root $\alpha$ of $f$ in $[L,U)$. Let $\sqrt[n]{x}=\frac{U+\alpha}{2}$. Then clearly $f(\sqrt[n]{x}) <0$. 
\end{glists}
Thus we have derived the desired contradiction in both cases. Hence $C \geq B$ and the lemma is proved. 
\end{proof}
\newpage

\begin{lemma}
\label{lem:UL} Let $R_{p,q}$ be a contracting $n^{th}$ degree map. Then we
have 
\begin{equation*}
\underset{L,U,x}{\forall }\;\;\;\; 0<L\leq \sqrt[n]{x}\leq
U\;\;\;\;\Longrightarrow \;\;\;\;U^{\ast }\leq U^{\prime }.
\end{equation*}
\end{lemma}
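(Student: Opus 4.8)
The plan is to follow the same template as the proof of Lemma~\ref{lem:LL}, but applied to the upper endpoint. There is one essential difference: in Lemma~\ref{lem:LL} the desired estimate was obtained by instantiating the contraction inequality at $\sqrt[n]{x}=U$, whereas for the upper endpoint the analogue would be to instantiate at $\sqrt[n]{x}=U$ as well --- but this is now the \emph{excluded} boundary of the region in which the useful inequality holds, so the gap at $\sqrt[n]{x}=U$ must be closed using Lemma~\ref{lem:closeU}. The comparison $U^{\ast}\leq U'$ then drops out after tracking signs.

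First I would use the contraction property of $R_{p,q}$ in the weak form $\sqrt[n]{x}\leq U'$, substitute the simplified expression for $U'$ from Lemma~\ref{lem:C2}, subtract $U$ from both sides, and factor
\[
x-U^{n}=\bigl(\sqrt[n]{x}-U\bigr)\bigl(U^{n-1}+U^{n-2}\sqrt[n]{x}+\cdots+\sqrt[n]{x}^{\,n-1}\bigr).
\]
This rewrites the inequality in the form $\dfrac{(\sqrt[n]{x}-U)\bigl(U^{n-1}+\cdots+\sqrt[n]{x}^{\,n-1}\bigr)}{q_{n+1}U^{n-1}+\cdots+q_{2n}L^{n-1}}\geq\sqrt[n]{x}-U$. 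Next I would restrict attention to the strict case $\sqrt[n]{x}<U$. Then the factor $\sqrt[n]{x}-U$ is strictly negative, so dividing both sides by it reverses the inequality; and since $U^{n-1}+\cdots+\sqrt[n]{x}^{\,n-1}>0$, dividing once more by this positive quantity produces exactly the hypothesis of Lemma~\ref{lem:closeU}, namely $A\geq B$ for all $L,U,x$ with $0<L\leq\sqrt[n]{x}<U$, where $A$ and $B$ are as defined there. Lemma~\ref{lem:closeU} then yields $C\geq B$ for all $0<L<U$, i.e. $\dfrac{1}{nU^{n-1}}\geq\dfrac{1}{q_{n+1}U^{n-1}+\cdots+q_{2n}L^{n-1}}$.

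Finally I would return to an arbitrary triple with $0<L\leq\sqrt[n]{x}\leq U$. If $L<U$, multiplying $C\geq B$ by $x-U^{n}\leq0$ reverses the inequality once more, and adding $U$ to both sides gives $U^{\ast}\leq U'$. If $L=U$, then $L\leq\sqrt[n]{x}\leq U$ forces $\sqrt[n]{x}=L=U$, hence $x=U^{n}$ and $U^{\ast}=U=U'$, so the claim holds trivially.

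The step I expect to be the real obstacle is the passage from the strict region $\sqrt[n]{x}<U$ to a conclusion that governs the boundary value $\sqrt[n]{x}=U$ implicitly sitting inside $U^{\ast}$: one cannot simply substitute $\sqrt[n]{x}=U$ into the derived inequality, since that case was excluded, and the needed inequality $C\geq B$ is a genuine limiting consequence --- which is precisely what the continuity-and-monotonicity argument packaged in Lemma~\ref{lem:closeU} supplies. Beyond that, the only delicate point is keeping track of the direction of the inequality, which flips when one divides by $\sqrt[n]{x}-U<0$ and flips back when one multiplies by $x-U^{n}\leq0$.
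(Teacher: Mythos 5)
Your proposal is correct and follows essentially the same route as the paper's own proof: both start from the contraction inequality $\sqrt[n]{x}\leq U'$, factor $x-U^n$, restrict to $\sqrt[n]{x}<U$ to cancel the sign factor and obtain the hypothesis $A\geq B$ of Lemma~\ref{lem:closeU}, invoke that lemma to get $C\geq B$, and then multiply by $x-U^{n}\leq 0$ and add $U$, with the degenerate case $L=U$ disposed of by noting $x-U^{n}=0$ there. The only difference is the purely cosmetic choice of pulling out $(\sqrt[n]{x}-U)$ versus $(U-\sqrt[n]{x})$, which shifts where the inequality flip is recorded but not the substance.
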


\begin{proof}
Let $R_{p,q}$ be a contracting $n^{th}$ degree map. Then we have%
\begin{equation*}
\underset{L,U,x}{\forall }\ \ 0<L\leq \sqrt[n]{x}\leq U\ \Longrightarrow \ 
\sqrt[n]{x}\leq U^{\prime }.
\end{equation*}%
From Lemma $\ref{lem:C2},$ we have%
\begin{eqnarray*}
\underset{L,U,x}{\forall }\;\;\;\;0 &<&L\leq \sqrt[n]{x}\leq
U\;\Longrightarrow \ \sqrt[n]{x}\leq U+\frac{x-U^{n}}{%
q_{n+1}U^{n-1}+q_{n+2}U^{n-2}L+\cdots +q_{2n}L^{n-1}} \\
\underset{L,U,x}{\forall }\;\;\;\;0 &<&L\leq \sqrt[n]{x}\leq
U\;\Longrightarrow \ U-\sqrt[n]{x}\geq \frac{U^{n}-x}{%
q_{n+1}U^{n-1}+q_{n+2}U^{n-2}L+\cdots +q_{2n}L^{n-1}} \\
\underset{L,U,x}{\forall }\;\;\;\;0 &<&L\leq \sqrt[n]{x}\leq
U\;\Longrightarrow \ U-\sqrt[n]{x}\geq \frac{\ \left( U-\sqrt[n]{x}\right)
\left( U^{n-1}+U^{n-2}\sqrt[n]{x}+\cdots +\sqrt[n]{x}^{n-1}\right) }{%
q_{n+1}U^{n-1}+q_{n+2}U^{n-2}L+\cdots +q_{2n}L^{n-1}}
\end{eqnarray*}%
By considering only the case $\sqrt[n]{x}<U$, we have%
\begin{equation*}
\underset{L,U,x}{\forall }\;\;\;\;0<L\leq \sqrt[n]{x}<U\;\Longrightarrow
1\geq \frac{U^{n-1}+U^{n-2}\sqrt[n]{x}+\cdots +\sqrt[n]{x}^{n-1}}{%
q_{n+1}U^{n-1}+q_{n+2}U^{n-2}L+\cdots +q_{2n}L^{n-1}}
\end{equation*}%
Since $U^{n-1}+U^{n-2}\sqrt[n]{x}+\cdots +\sqrt[n]{x}^{n-1}>0\ $for $0<L\leq 
\sqrt[n]{x}<U,$ we have%
\begin{equation*}
\underset{L,U,x}{\forall }\;\;\;\;0<L\leq \sqrt[n]{x}<U\;\Longrightarrow 
\frac{1}{U^{n-1}+U^{n-2}\sqrt[n]{x}+\cdots +\sqrt[n]{x}^{n-1}}\geq \frac{1}{%
q_{n+1}U^{n-1}+q_{n+2}U^{n-2}L+\cdots +q_{2n}L^{n-1}}
\end{equation*}%
By Lemma \ref{lem:closeU}
we have%
\begin{eqnarray*}
\underset{L,U}{\forall }\;\;\;\;0 <L<U\;\Longrightarrow \frac{1}{nU^{n-1}%
}\geq \frac{1}{q_{n+1}U^{n-1}+q_{n+2}U^{n-2}L+\cdots +q_{2n}L^{n-1}}
\end{eqnarray*}%
Since $x-U^{n}\leq 0$ for $\sqrt[n]{x}\leq U,$ we have%
\begin{equation*}
\underset{L,U,x}{\forall }\;\;\;\;0<L\leq \sqrt[n]{x}\leq U\;\wedge\
L<U\Longrightarrow \ \frac{x-U^{n}}{nU^{n-1}}\leq \ \frac{x-U^{n}}{%
q_{n+1}U^{n-1}+q_{n+2}U^{n-2}L+\cdots +q_{2n}L^{n-1}}
\end{equation*}%
Since $x-U^{n}=0$ when $L=U,$ we have%
\begin{equation*}
\underset{L,U,x}{\forall }\;\;\;\;0<L\leq \sqrt[n]{x}\leq U\;\Longrightarrow
\ \frac{x-U^{n}}{nU^{n-1}}\leq \ \frac{x-U^{n}}{%
q_{n+1}U^{n-1}+q_{n+2}U^{n-2}L+\cdots +q_{2n}L^{n-1}}
\end{equation*}%
By adding $U$ on both sides, we have%
\begin{equation*}
\underset{L,U,x}{\forall }\;\;\;\;0<L\leq \sqrt[n]{x}\leq U\ \Longrightarrow
U+\ \ \frac{x-U^{n}}{nU^{n-1}}\leq U+\frac{x-U^{n}}{%
q_{n+1}U^{n-1}+q_{n+2}U^{n-2}L+\cdots +q_{2n}L^{n-1}}
\end{equation*}%
Thus%
\begin{equation*}
\underset{L,U,x}{\forall }\;\;\;\;\;\;\;0<L\leq \sqrt[n]{x}\leq
U\Longrightarrow \;U^{\ast }\leq U^{\prime }.
\end{equation*}
\end{proof}

Now we are ready to prove the two claims in Main Theorem. The following
lemma (Lemma~\ref{lem:(a)}) will prove the claim (a) and the subsequent
lemma (Lemma~\ref{lem:(b)}) will prove the claim (b). %

\begin{lemma}[Main Theorem (a)]
\label{lem:(a)} Let $R_{p,q}$ be a contracting $n^{th}$ degree map which is
not $R_{p^{\ast},q^{\ast}}$ (Secant-Newton). Then we have 
\begin{equation*}
\underset{L,U,x}{\forall}\;\;\;\;0<L\leq\sqrt[n]{x}\leq
U\;\;\;\;\Longrightarrow\;\;\;\;R_{p^{\ast},q^{\ast}}([L,U],x)\;\;\subseteq
\;\;R_{p,q}([L,U],x).
\end{equation*}
\end{lemma}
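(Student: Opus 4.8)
The plan is to show that $R_{p^{\ast},q^{\ast}}([L,U],x) \subseteq R_{p,q}([L,U],x)$ by establishing the two containment inequalities on the endpoints separately: namely that $L' \leq L^{\ast}$ and that $U^{\ast} \leq U'$ for all admissible $L, U, x$. Writing $R_{p,q}([L,U],x) = [L', U']$ and $R_{p^{\ast},q^{\ast}}([L,U],x) = [L^{\ast}, U^{\ast}]$, the interval $[L^{\ast}, U^{\ast}]$ is contained in $[L', U']$ precisely when $L' \leq L^{\ast}$ and $U^{\ast} \leq U'$ — this is just the definition of interval inclusion, since both intervals are nonempty (indeed both contain $\sqrt[n]{x}$ by the contraction hypothesis applied to $R_{p,q}$ and to $R_{p^{\ast},q^{\ast}}$, which is itself contracting).

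The two required inequalities are exactly the conclusions of the lemmas already proved. Lemma~\ref{lem:LL} gives
\[
\underset{L,U,x}{\forall}\;\;\;\;0<L\leq\sqrt[n]{x}\leq U\;\;\;\;\Longrightarrow\;\;\;\;L'\leq L^{\ast},
\]
and Lemma~\ref{lem:UL} gives
\[
\underset{L,U,x}{\forall}\;\;\;\;0<L\leq\sqrt[n]{x}\leq U\;\;\;\;\Longrightarrow\;\;\;\;U^{\ast}\leq U'.
\]
So the proof is essentially a two-line assembly: fix $L, U, x$ with $0 < L \leq \sqrt[n]{x} \leq U$; invoke Lemma~\ref{lem:LL} and Lemma~\ref{lem:UL}; conclude $L' \leq L^{\ast} \leq \sqrt[n]{x} \leq U^{\ast} \leq U'$, hence $[L^{\ast},U^{\ast}] \subseteq [L',U']$, which is the claim. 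Note that the hypothesis ``$R_{p,q}$ is not the Secant-Newton map'' is not actually needed for part (a) — it is genuinely used only in part (b) to get \emph{proper} inclusion — so I would either remark on this or simply not use it.

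The work has already been done in the earlier lemmas, so there is no real obstacle remaining at this stage; the only thing to be careful about is that interval inclusion is being used in the ``endpoint'' sense, which requires knowing both intervals are legitimately ordered (left endpoint $\leq$ right endpoint). That ordering is guaranteed by the contraction property — both $L' \leq \sqrt[n]{x} \leq U'$ and $L^{\ast} \leq \sqrt[n]{x} \leq U^{\ast}$ hold — so the chain $L' \leq L^{\ast} \leq \sqrt[n]{x} \leq U^{\ast} \leq U'$ is valid and the containment follows immediately. If I wanted to be maximally explicit I would spell out that containment of closed intervals $[a,b] \subseteq [c,d]$ (with $a \leq b$, $c \leq d$) is equivalent to $c \leq a$ and $b \leq d$, and that here $c = L'$, $a = L^{\ast}$, $b = U^{\ast}$, $d = U'$.
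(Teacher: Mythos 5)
Your proposal is correct and follows exactly the paper's own argument: fix $L,U,x$, invoke Lemma~\ref{lem:LL} for $L'\leq L^{\ast}$ and Lemma~\ref{lem:UL} for $U^{\ast}\leq U'$, and conclude the inclusion. Your extra remarks — that both intervals are well-ordered because the maps are contracting, and that the ``not Secant-Newton'' hypothesis is unused in part (a) — are accurate observations but do not change the substance.
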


\begin{proof}
Let $R_{p,q}$ be a contracting $n^{th}$ degree map which is not $R_{p^{\ast
},q^{\ast }}$ (Secant-Newton), that is, $p\neq p^{\ast }$ or $q\neq q^{\ast
} $. Let $L,U,x$ be arbitrary but fixed such that $0<L\leq \sqrt[n]{x}\leq
U. $ From Lemmas \ $\ref{lem:LL}$\ and $\ref{lem:UL},$ we have%
\begin{equation*}
\ L^{\prime }\leq L^{\ast }\;\wedge \;U^{\ast }\leq U^{\prime }.
\end{equation*}%
Hence $\ R_{p^{\ast },q^{\ast }}([L,U],x)\;\;\subseteq \;\;R_{p,q}([L,U],x).$
Main Theorem (a) has been proved.
\end{proof}

\begin{lemma}[Main Theorem (b)]
\label{lem:(b)} Let $R_{p,q}$ be a contracting $\,$$n^{th}$ degree map which
is not $R_{p^{\ast},q^{\ast}}$ (Secant-Newton). Then we have%
\begin{equation*}
\overset{o}{\underset{L,U,x}{\forall}}\;\;\;\;0<L \leq \sqrt[n]{x}\leq
U\;\;\;\;\Longrightarrow
\;\;\;R_{p^{\ast},q^{\ast}}([L,U],x)\;\;\subsetneq\;\;R_{p,q}([L,U],x).
\end{equation*}
\end{lemma}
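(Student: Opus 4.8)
The plan is to build directly on Lemma~\ref{lem:(a)}, which already supplies $L^{\prime}\le L^{\ast}$ and $U^{\ast}\le U^{\prime}$ throughout the domain $0<L\le\sqrt[n]{x}\le U$; to upgrade this inclusion to a proper one it suffices to show that, outside a set of $(L,U,x)$ of Lebesgue measure zero, at least one of these two inequalities is strict. Since $R_{p,q}\ne R_{p^{\ast},q^{\ast}}$, Lemmas~\ref{lem:C1} and~\ref{lem:C2} localize the discrepancy into the two denominators. Write
\[
P(L,U)=p_{n+1}L^{n-1}+\cdots+p_{2n}U^{n-1},\qquad P^{\ast}(L,U)=L^{n-1}+\cdots+U^{n-1},
\]
\[
Q(L,U)=q_{n+1}U^{n-1}+\cdots+q_{2n}L^{n-1},\qquad Q^{\ast}(L,U)=nU^{n-1}.
\]
Since the monomials $L^{n-1},L^{n-2}U,\ldots,U^{n-1}$ are linearly independent and Lemma~\ref{lem:C1} already forces $p_{0},\ldots,p_{n}$ (resp.\ $q_{0},\ldots,q_{n}$) to equal the Secant-Newton values, we have $p\ne p^{\ast}\iff P\not\equiv P^{\ast}$ and $q\ne q^{\ast}\iff Q\not\equiv Q^{\ast}$, and at least one of these holds. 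I would also record, exactly as it is derived along the way in the proofs of Lemmas~\ref{lem:LL} and~\ref{lem:UL} from the contraction condition, that $P(L,U)\ge P^{\ast}(L,U)>0$ and $Q(L,U)\ge Q^{\ast}(L,U)>0$ for all $0<L<U$.

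First I would treat the case $p\ne p^{\ast}$. Then $P-P^{\ast}$ is a nonzero polynomial in $(L,U)$, so its zero set $Z_{p}=\{(L,U):P(L,U)=P^{\ast}(L,U)\}$ is null in the plane, and hence $E=\{L=U\}\cup(Z_{p}\times\mathbb{R})\cup\{x=L^{n}\}$ is a finite union of sets of dimension at most $2$ and so is null in $\mathbb{R}^{3}$. For $(L,U,x)$ in the domain but not in $E$ we have $0<L<U$, hence $P(L,U)>P^{\ast}(L,U)>0$, together with $x-L^{n}>0$, so
\[
L^{\prime}=L+\frac{x-L^{n}}{P(L,U)}\;<\;L+\frac{x-L^{n}}{P^{\ast}(L,U)}=L^{\ast},
\]
while $U^{\ast}\le U^{\prime}$ by Lemma~\ref{lem:UL}; therefore $[L^{\ast},U^{\ast}]\subsetneq[L^{\prime},U^{\prime}]$. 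The case $p=p^{\ast}$, in which necessarily $q\ne q^{\ast}$, is symmetric: now $Q-Q^{\ast}$ is a nonzero polynomial, $Z_{q}=\{(L,U):Q(L,U)=Q^{\ast}(L,U)\}$ is null, and $E=\{L=U\}\cup(Z_{q}\times\mathbb{R})\cup\{x=U^{n}\}$ is null in $\mathbb{R}^{3}$; for $(L,U,x)$ in the domain but not in $E$ we have $0<L<U$, hence $Q(L,U)>Q^{\ast}(L,U)>0$, together with $x-U^{n}<0$, and dividing the negative quantity $x-U^{n}$ by the larger positive denominator moves it closer to $0$, so
\[
U^{\prime}=U+\frac{x-U^{n}}{Q(L,U)}\;>\;U+\frac{x-U^{n}}{Q^{\ast}(L,U)}=U^{\ast},
\]
while $L^{\prime}\le L^{\ast}$ by Lemma~\ref{lem:LL}; hence again $[L^{\ast},U^{\ast}]\subsetneq[L^{\prime},U^{\prime}]$. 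In both cases the proper inclusion holds for every $(L,U,x)$ in the domain except on a measure-zero set, which is precisely the meaning of $\overset{o}{\forall}$, proving the lemma.

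I expect the main obstacle to be not any single computation but identifying the exceptional set correctly and seeing why it is genuinely nonempty. In the quadratic case $n=2$ treated in \cite{erascu_square_2013}, $P-P^{\ast}$ is a nonzero \emph{linear} form whose zero line has negative slope and therefore misses the cone $\{0<L<U\}$ entirely, so strict inclusion holds on the whole open domain; but for $n\ge 3$ the form $P-P^{\ast}$ (or $Q-Q^{\ast}$) can be, for instance, a nonzero perfect square of a linear form, and then it vanishes along a ray interior to $\{0<L<U\}$ without being identically zero. On such a ray one gets $L^{\prime}=L^{\ast}$ (and, if moreover $q=q^{\ast}$, also $U^{\prime}=U^{\ast}$), so proper inclusion really does fail there --- which is exactly why the universal quantifier must be weakened to $\overset{o}{\forall}$ and the strict inequalities in the hypothesis relaxed to $\le$, as in Theorem~\ref{thm:mainThm}(b), a phenomenon illustrated concretely by the remark following that theorem. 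The remaining points --- passing from ``the polynomial $P-P^{\ast}$ in $(L,U)$ vanishes'' to ``a null subset of $(L,U,x)$-space'', and checking that the boundary slices $\{L=U\}$, $\{x=L^{n}\}$, $\{x=U^{n}\}$ of the domain are themselves null --- are routine.
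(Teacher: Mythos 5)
Your proof is correct and takes essentially the same approach as the paper: split on whether $p\ne p^{\ast}$ or $q\ne q^{\ast}$, in each case exhibit a nonzero polynomial in $(L,U,x)$ (in the paper, $f=(x-L^{n})(P-P^{\ast})$ or $f=(x-U^{n})(Q-Q^{\ast})$) whose zero set contains all exceptions, and outside that null set combine strictness in one endpoint with the non-strict inclusion from Lemma~\ref{lem:(a)}. The only cosmetic differences are that the paper passes directly from $f\ne 0$ to $L^{\prime}\ne L^{\ast}$ without re-deriving $P>P^{\ast}$, and that your extra piece $\{L=U\}$ of the exceptional set is already contained in $\{x=L^{n}\}$ once $L\le\sqrt[n]{x}\le U$ is imposed.
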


\begin{proof}
Let $R_{p,q}$ be a contracting $n^{th}$ degree map which is not $R_{p^{\ast
},q^{\ast }}$ (Secant-Newton), that is, $p\neq p^{\ast }$ or $q\neq q^{\ast
} $. We need to show 
\begin{equation*}
\overset{o}{\underset{L,U,x}{\forall }}\;\;\;\;0<L\leq \sqrt[n]{x}\leq
U\;\;\;\;\Longrightarrow \;\;\;R_{p^{\ast },q^{\ast
}}([L,U],x)\;\;\subsetneq \;\;R_{p,q}([L,U],x).\newline
\end{equation*}%
It suffices to find a non-zero polynomial $f$\ in the variables $L,U,x\ $%
such that 
\begin{equation*}
\underset{f\left( L,U,x\right) \neq 0}{\underset{L,U,x}{\forall }}%
\;\;0<L\leq \sqrt[n]{x}\leq U\;\;\;\;\Longrightarrow \;\;\;R_{p^{\ast
},q^{\ast }}([L,U],x)\;\;\subsetneq \;\;R_{p,q}([L,U],x)\newline
\end{equation*}%
since the solution set of $f\left( L,U,x\right) =0\ $has measure zero. We
consider two cases:

\bigskip \noindent \textbf{Case 1}: $p\neq p^{\ast }.$ Let 
\begin{equation*}
f=\left( x-L^{n}\right) \left( (p_{n+1}-1)L^{n-1}+(p_{n+2}-1)L^{n-2}U+\cdots
+(p_{2n}-1)U^{n-1}\right). 
\end{equation*}%
Note that $f$ is a non-zero polynomial. Let $L,U,x$ be such that $f\left(
L,U,x\right) \neq 0$ and $0<L\leq \sqrt[n]{x}\leq U.$ We need to show that $%
R_{p^{\ast },q^{\ast }}([L,U],x)\;\;\subsetneq \;\;R_{p,q}([L,U],x).$ From
Lemma \ref{lem:(a)}, we already have $R_{p^{\ast },q^{\ast
}}([L,U],x)\;\subseteq \;\;R_{p,q}([L,U],x).$ Thus it suffices to show that 
\begin{equation*}
R_{p^{\ast },q^{\ast }}([L,U],x)\;\;\neq \;\;R_{p,q}([L,U],x).\newline
\end{equation*}%
Note%
\begin{eqnarray*}
&&f\left( L,U,x\right) \neq 0 \\
&\Longrightarrow &\ \left( x-L^{n}\right) \left(
(p_{n+1}-1)L^{n-1}+(p_{n+2}-1)L^{n-2}U+\cdots +(p_{2n}-1)U^{n-1}\right) \neq
0 \\
&\Longrightarrow &\left( x-L^{n}\right) \left( \left(
p_{n+1}L^{n-1}+p_{n+2}L^{n-2}U+\cdots +p_{2n}U^{n-1}\right) -\left(
L^{n-1}+L^{n-2}U+\cdots +U^{n-1}\right) \right) \neq 0 \\
&\Longrightarrow &\frac{x-L^{n}}{p_{n+1}L^{n-1}+p_{n+2}L^{n-2}U+\cdots
+p_{2n}U^{n-1}}\neq \frac{x-L^{n}}{L^{n-1}+L^{n-2}U+\cdots +U^{n-1}} \\
&\Longrightarrow &L^{\prime }\neq L^{\ast } \\
&\Longrightarrow &R_{p^{\ast },q^{\ast }}([L,U],x)\;\;\neq
\;\;R_{p,q}([L,U],x).\newline
\end{eqnarray*}

\bigskip \noindent \textbf{Case 2}: $q\neq q^{\ast }.$ Let 
\begin{equation*}
f=\left( x-U^{n}\right) \left( \left( q_{n+1}-n\right)
U^{n-1}+q_{n+2}U^{n-2}L+\cdots +q_{2n}L^{n-1}\right).
\end{equation*}%
Note that $f$ is a non-zero polynomial. Let $L,U,x$ be such that $f\left(
L,U,x\right) \neq 0$ and $0<L\leq \sqrt[n]{x}\leq U.$ We need to show that $%
R_{p^{\ast },q^{\ast }}([L,U],x)\;\;\subsetneq \;\;R_{p,q}([L,U],x).$ From
Lemma \ref{lem:(a)}, we already have $R_{p^{\ast },q^{\ast
}}([L,U],x)\;\subseteq \;\;R_{p,q}([L,U],x).$ Thus it suffices to show that 
\begin{equation*}
R_{p^{\ast },q^{\ast }}([L,U],x)\;\;\neq \;\;R_{p,q}([L,U],x).\newline
\end{equation*}%
Note%
\begin{eqnarray*}
&&f\left( L,U,x\right) \neq 0 \\
&\Longrightarrow &\ \left( x-U^{n}\right) \left( \left( q_{n+1}-n\right)
U^{n-1}+q_{n+2}U^{n-2}L+\cdots +q_{2n}L^{n-1}\right) \neq 0 \\
&\Longrightarrow &\left( x-U^{n}\right) \left( \left(
q_{n+1}U^{n-1}+q_{n+2}U^{n-2}L+\cdots +q_{2n}L^{n-1}\right) -nU^{n-1}\right)
\neq 0 \\
&\Longrightarrow &\frac{x-U^{n}}{q_{n+1}U^{n-1}+q_{n+2}U^{n-2}L+\cdots
+q_{2n}L^{n-1}}\neq \frac{x-U^{n}}{nU^{n-1}} \\
&\Longrightarrow &U^{\prime }\neq U^{\ast } \\
&\Longrightarrow &R_{p^{\ast },q^{\ast }}([L,U],x)\;\;\neq
\;\;R_{p,q}([L,U],x).\newline
\end{eqnarray*}

\noindent Main Theorem (b) has been proved.
\end{proof}

\section{Conclusion}
In this paper we extended a previous work on the optimal square root computation by 
Erascu-Hong \cite{erascu_square_2013} to arbitrary   $n^{th}$  root
computation.  The contributions are as follows.
\begin{itemize}
\item We proved that the well
known Secant-Newton refinement map is  ``optimal'' among  its natural
generalizations, that is, among the maps that are
contracting and are certain rational functions.
\item 
 We found that the precise notion of the ``optimality''  for the square-root case in \cite{erascu_square_2013}  could {\em not\/} be extended straightforwardly to the $n^{th}$ root case.  It had to be modified in a subtle but crucial way.  
\item 
Furthermore, we found that the proof techniques used in~\cite{erascu_square_2013}    could {\em not\/} be  straightforwardly extended.  
In fact, it turns out that only a small part of the proof technique could  be straightforwardly generalized. However, the rest of the proof could not be generalized. Thus, we developed several new proof techniques.
\end{itemize}
This work motivates several interesting further questions.
\begin{itemize}
\item What about dropping the condition ``contracting''?
The Secant-Newton map is a particular instance of interval
Newton map with slope where $m$ is chosen to be $U$.
If one chooses a different $m$ value (from $U$), then the interval
Newton map with slope
is not contracting. In practice, one remedies this by intersecting
the result of the map
with $[L,U]$ before the next iteration. This trivially ensures
that the resulting map is contracting. This motivates a larger
family
of maps where a map is defined as a quadratic map   composed with
intersection with $[L,U]$.
One asks what  the optimal map is among the larger family of maps.\item  What about   broadening the scope of this problem? The problem tackled in this paper could be recast as follows: given  a positive number $x$, find the positive  real root of the polynomial equation $f(y)=y^n-x$, using an interval refinement map. This motivates the following natural generalization: find a real root of an arbitrary polynomial equation in a given  interval, using an interval refinement map.  Again, one could ask what  the optimal refinement map is among a naturally chosen  family of maps.
 \end{itemize}
We leave them as open problems/challenges for future research.

\bibliographystyle{plain}

\begin{thebibliography}{10}

\bibitem{alefeld_introduction_1983}
G.~Alefeld and J.~Herzberger.
\newblock {\em {Introduction to Interval Computations}}.
\newblock Academic Press, Inc., New York, NY, 1983.

\bibitem{beebe_accurate_1991}
N.~Beebe.
\newblock {Accurate square root computation}.
\newblock Technical report, Center for Scientific Computing, Department of
  Mathematics,University of Utah, 1991.
  
\bibitem{bruce_cube_1980}
I. Bruce. 
\newblock{A Means Generated Iteration for Cube Roots}.
\newblock{ \em The Mathematical Gazette}. 64(428):122, 1980.

\bibitem{cody_software_1980}
W.~Cody and W.~Waite.
\newblock {\em {Software Manual for the Elementary Functions}}.
\newblock Prentice-Hall, Englewood Cliffs, NJ, 1980.

\bibitem{dubeau_nth_newton_2009}
F. Dubeau. 
\newblock{Newton's method and high-order algorithms for the $n$th root computation}.
\newblock{\em Journal of Computational and Applied Mathematics}. 224(1):66--76, 2009.

\bibitem{erascu_square_2013}
M. Erascu and H. Hong.
\newblock{The Secant-Newton Map is Optimal Among Contracting Quadratic Maps for Square Root Computation}.
\newblock{ \em Journal of Reliable Computing}. 18:73--81, 2013.

\bibitem{fowler_square_1998}
D.~Fowler and E.~Robson.
\newblock {Square root approximations in old Babylonian mathematics: YBC
7289
  in Context}.
\newblock {\em {Historia Mathematica}}, 25(4):366--378, 1998.

\bibitem{hart_computer_1968}
J.F. Hart, E.W. Cheney, C.L. Lawson, H.J. Maehly, C.K. Mesztenyi, J.R. Rice,
  H.C.~Thacher Jr., and C.~Witzgall.
\newblock {\em {Computer Approximations}}.
\newblock John Wiley, 1968.
\newblock Reprinted, E. Krieger Publishing Company (1978).

\bibitem{hernandez_nth_alg_2004}
M. A. Hern\'andez, N. Romero. 
\newblock{High order algorithms for approximating $n$th roots}.
\newblock{ \em International Journal of Computer Mathematics}. 81(8):1001--1014, 2004.

\bibitem{laufer_iteration_1963}
H. Laufer.
\newblock{ Finding the Nth Root of a Number by Iteration}.
\newblock{ \em Mathematics Magazine}, 36(3):157--162, 1963.

\bibitem{meggitt_pseudo_1962}
J.~R. Meggitt.
\newblock {Pseudo division and pseudo multiplication processes}.
\newblock {\em IBM Journal of Research and Development}, 6(2):210--226, 1962.

\bibitem{moore_interval_1966}
R.~E. Moore.
\newblock {\em {Interval Analysis}}.
\newblock Prentice-Hall, Englewood Cliffs, NJ, 1966.

\bibitem{moore_introduction_2009}
R.~E. Moore, R.~B. Kearfott, and M.~J. Cloud.
\newblock {\em {Introduction to Interval Analysis}}.
\newblock Society for Industrial and Applied Mathematics, Philadelphia, PA,
  2009.

\bibitem{morrison_method_1956}
D.~R. Morrison.
\newblock {A method for computing certain inverse functions}.
\newblock {\em { Mathematical Tables and Other Aids to Computation}}, 10(56):202--208,
1956.

\bibitem{Revol_interval_2003}
N.~Revol.
\newblock {Interval Newton iteration in multiple precision for the univariate
  case}.
\newblock {\em Numerical Algorithms}, 34(2--4):417--426, 2003.

\bibitem{wensley_class_1959}
J.~H. Wensley.
\newblock {A class of non-analytical iterative processes}.
\newblock {\em {The Computer Journal}}, 1(4):163--167, 1959.

\end{thebibliography}

\end{document}